\documentclass[conference]{IEEEtran}
\IEEEoverridecommandlockouts
\usepackage{algorithm}
\usepackage{array}
\usepackage[caption=false,font=normalsize,labelfont=sf,textfont=sf]{subfig}
\usepackage{tabularx,booktabs,makecell}
\usepackage{textcomp}
\usepackage{stfloats}
\usepackage{url}
\usepackage{verbatim}
\usepackage{graphicx}
\usepackage{cite}
\hyphenation{op-tical net-works semi-conduc-tor IEEE-Xplore}
\usepackage{amsmath,amssymb,amsfonts,amsthm,mathtools}
\usepackage{xcolor}
\usepackage{blindtext}
\usepackage{lipsum}
\usepackage{subfig}
\usepackage{cuted}
\usepackage{physics}
\usepackage[long]{optidef}
\usepackage{multirow}
\usepackage{steinmetz}
\usepackage{bigints}
\usepackage{algpseudocode}
\usepackage{graphicx}
\usepackage{xcolor}
\usepackage{placeins}

\usepackage{tikz}
\usetikzlibrary{matrix, positioning, shapes, backgrounds, fit, arrows.meta}

\definecolor{chaosblue}{RGB}{30, 100, 200}
\definecolor{chaosgreen}{RGB}{0, 150, 100}
\definecolor{chaosred}{RGB}{200, 50, 50}
\definecolor{chaospurple}{RGB}{150, 50, 200}
\definecolor{chaosorange}{RGB}{230, 120, 0}

\newcommand{\R}{\mathbb{R}}
\newcommand{\Q}{\mathbb{Q}}

\newcommand{\F}{\mathbb{F}}

\newcommand{\B}{\mathcal{B}}
\newcommand{\X}{\mathcal{X}}

\newtheorem{definition}{Definition}[section]
\newtheorem{theorem}[definition]{Theorem}

\newtheorem{corollary}[definition]{Corollary}

\begin{document}

\title{Dyadic-Chaotic Lifting S-Boxes for Enhanced Physical-Layer Security within 6G Networks\vspace{-5mm}}
\author{\IEEEauthorblockN{Ilias Cherkaoui and Indrakshi Dey}
\IEEEauthorblockA{Walton Institute, South East Technological University, Waterford, Ireland
\\\ ilias.cherkaoui@waltoninstitute.ie,\,indrakshi.dey@waltoninstitute.ie\vspace{-5mm}}
\thanks{This work is supported in part by HEARG TU RISE Project ``AIQ-Shield" and the HORIZON ECCC Project ``Q-FENCE" under Grant Number 101225708.}}
\date{}
\maketitle

\begin{abstract}
Sixth-Generation (6G) wireless networks will interconnect billions of resource-constrained devices and time-critical services, where classical, fixed, and heavy cryptography strains latency/energy budgets and struggles against large-scale, precomputation attacks. Physical-Layer Security (PLS) is therefore pivotal to deliver lightweight, information-theoretic protection, but still requires strong, reconfigurable confusion components that can be diversified per slice, session, or device to blunt large-scale precomputation and side-channel attacks. In order to address the above requirement, we introduce the first-ever chaos-lifted substitution box (S-box) for PLS that couples a $\beta$-transformation–driven dynamical system with dyadic conditional sampling to generate time-varying, seedable 8-bit permutations on demand. This construction preserves uniformity via ergodicity, yields full 8-bit bijections, and supports on-the-fly diversification across sessions. The resulting S-box attains optimal algebraic degree 7 on every output bit and high average nonlinearity 102.5 (85\% of the 8-bit bound), strengthening resistance to algebraic and linear cryptanalysis. Differential and linear profiling report max DDT entry 10 (probability 0.039) and max linear probability 0.648, motivating deployment within a multi-round cipher with a strong diffusion layer, where the security-to-efficiency trade-off is compelling. Our proposed reconfigurable, lightweight S-box directly fulfills key PLS requirements of 6G networks by delivering fast, hardware-amenable confusion components with built-in agility against evolving threats.\footnote{\copyright 2025 IEEE. Personal use of this material is permitted. Permission from IEEE must be obtained for all other uses, in any current or future media, including reprinting/republishing this material for advertising or promotional purposes, creating new collective works, for resale or redistribution to servers or lists, or reuse of any copyrighted component of this work in other works.}
\end{abstract}

\begin{IEEEkeywords}
6G, physical-layer security, chaos-based cryptography, S-box, dyadic sampling
\end{IEEEkeywords}

\section{Introduction}\label{Sec: 1_introduction}

\IEEEPARstart{S}{ixth-Generation} (6G) network will fuse holographic telepresence, immersive communications, and ubiquitous Artificial Intelligence (AI) into a hyper-connected fabric, pushing latency, reliability, and scale beyond Fifth-Generation (5G) limits \cite{saad2024vision}. This expansion magnifies the attack surface from spectrum to devices, where traditional, fixed, and compute-heavy ciphers struggle to satisfy extreme throughput and energy constraints at the \emph{physical layer} (PHY) \cite{wang2023physical}. Physical-Layer Security (PLS) complements computational cryptography by exploiting channel randomness to deliver lightweight, information-theoretic protection \cite{xu2024integrated,wyner1975wiretap}. Within PLS and symmetric-key designs, the substitution box (S-box) is the core non-linear primitive that provides confusion and underpins resistance to linear and differential cryptanalysis \cite{zhang2023design,daemen2002rijndael}. 

Chaos-driven synthesis is compelling because properly parameterized chaotic maps exhibit ergodicity, sensitivity to initial conditions (positive Lyapunov exponents), and rapid mixing—properties that align naturally with Shannon-style confusion and diffusion \cite{li2024chaos,alshammari2023efficient}. Yet practical 6G constraints expose a persistent gap: preserving \emph{robust} cryptographic margins \emph{and} minimizing implementation cost on resource-limited endpoints, while enabling dynamic, per-slice S-box diversification at scale. In practice, finite-precision discretization, parameter quantization, and short-cycle artifacts can inject statistical bias, depress nonlinearity, and elevate differential/linear probabilities unless the design explicitly controls post-quantization behavior; simultaneously, ROM-heavy tables or arithmetic-heavy generators are ill-suited to ultra-low-power 6G devices \cite{alkhaldi2024lightweight}. Moreover, network slicing demands frequent, seedable reconfiguration so each slice, session, or device can realize a distinct S-box instance that thwarts large-scale precomputation and cross-slice correlation attacks without burdensome distribution logistics \cite{iqbal2025dynamic}. We address these needs by coupling chaotic $\beta$-dynamics with dyadic conditional sampling to enforce balanced coverage of the discrete state space and high algebraic complexity after digitization, to enable lightweight, table-free (or small-table) realization with deterministic latency, and to support fast, seed-driven diversification that yields statistically strong, distinct S-boxes backed by verified nonlinearity and disciplined differential/linear profiles.

We formulate a {novel}, chaos-driven S-box built via dyadic conditional sampling—{the first-ever} design coupling chaotic $\beta$-dynamics with seedable dyadic sampling for time-varying, reconfigurable confusion tailored to 6G PLS. The proposed 8$\times$8 S-box attains per-bit {optimal algebraic degree 7} and {high average nonlinearity of 102.5}, with disciplined differential and linear profiles suitable for lightweight, multi-round primitives with strong diffusion. These properties yield an excellent security–efficiency trade-off for ultra-reliable low-latency communications (URLLC) and massive machine-type communications (mMTC), directly addressing 6G agility and device constraints \cite{zhou2024novel,farah2023advanced}. We rigorously evaluate our proposed technique against standard metrics—nonlinearity, differential uniformity, algebraic complexity, and linear approximation bias, thereby exhibiting optimal algebraic degree, competitive nonlinearity and a security–efficiency profile aligned with PHY-layer deployment in constrained 6G devices.

The rest of the paper is organized as follows: Section~\ref{Sec: system_model_problem} formalizes the system model and problem. Section~\ref{Sec: Game_formation} presents baseline cryptographic design. Section~\ref{Sec: Propsed_epistemology_work} details the proposed chaos-driven, dyadic sampling methodology and numerical analysis. Section~\ref{Sec: results_and_discussion} reports results and comparisons. Section~\ref{Sec: conclusion} concludes the paper.

\section{Mathematical foundation}\label{Sec: system_model_problem}

We start by formalizing the chaotic engine that underpins our seedable, reconfigurable confusion components. We employ the $\beta$-transformation as a lightweight chaotic engine to generate seedable, statistically uniform symbols suitable for confusion in substitution boxes. Its piecewise-expanding structure yields topological chaos, exact (hence ergodic) invariant statistics under the Parry measure, and rapid mixing—precisely the ingredients needed for uniform symbol generation and robust reconfigurability. After introducing the transformation and its digit expansion, we establish Devaney chaos (global unpredictability), then exactness/ergodicity under the Parry measure (statistical regularity), and finally leverage these properties to prove uniform distribution of $\mathbb{F}_{2^n}$-valued blocks extracted from the $\beta$-digits. We conclude by showing that our dyadic conditional sampling preserves this uniformity, enabling on-the-fly, set-conditioned symbol selection without degrading cryptographic quality. The following first definition certifies the topological chaos that drives diversification.

\begin{definition}
For a real $\beta>1$, the \emph{$\beta$-transformation} $T_\beta:[0,1[ \to [0,1[$ is $T_\beta(x)=\beta x-\lfloor \beta x\rfloor,$ and the \emph{$\beta$-expansion} of $x\in[0,1[$ is the digit sequence $(a_n)_{n\ge 1}$ with $a_n=\lfloor \beta\, T_\beta^{\,n-1}(x)\rfloor,$ where $T_\beta^{\,n}$ denotes $n$-fold composition. Let $\B=\{\beta\in\R\setminus\Q:\beta>1\}$ be the parameter space and $\X=[0,1[$ the seed space.
\end{definition}

\begin{theorem}
For $\beta\in\B$, the system $([0,1[,T_\beta)$ is chaotic in the sense of Devaney.
\end{theorem}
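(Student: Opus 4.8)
The plan is to verify the three defining conditions of Devaney chaos for the system $([0,1[, T_\beta)$: (i) density of periodic points, (ii) topological transitivity, and (iii) sensitive dependence on initial conditions. The standard modern approach is to recall that, by the Banks–Brooks–Cairns–Davis–Stacey theorem, sensitivity is actually a \emph{consequence} of the first two conditions on any infinite metric space, so the real work reduces to establishing density of periodic points and transitivity. I would therefore state this reduction explicitly and then concentrate on (i) and (ii), obtaining sensitivity for free.

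\smallskip

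For the two remaining conditions, the cleanest route is to exploit the piecewise-expanding, Markov-like structure of $T_\beta$ encoded by the $\beta$-expansion. First I would recall that $T_\beta$ acts as a full shift on admissible digit sequences: the digits $a_n=\lfloor \beta\,T_\beta^{\,n-1}(x)\rfloor$ lie in $\{0,1,\dots,\lceil\beta\rceil-1\}$, and a sequence is admissible precisely when every tail is lexicographically dominated by the expansion of $1$ (the Parry condition). Because $\beta>1$, the map is uniformly expanding with slope $\beta$ on each branch, so any subinterval $J\subseteq[0,1[$ has $\lvert T_\beta^{\,k}(J)\rvert\ge \beta^{k}\lvert J\rvert$ until its image covers a full branch. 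For \textbf{transitivity}, I would take two arbitrary nonempty open intervals $U,V$ and argue that after finitely many iterates the expansion forces $T_\beta^{\,k}(U)$ to contain an interval of length $1$ in coordinates, hence to meet $V$; concretely one shows some iterate of $U$ equals all of $[0,1[$ up to a set of measure zero, which certainly intersects $V$. For \textbf{density of periodic points}, given any $x$ and $\varepsilon>0$ I would truncate the $\beta$-expansion of $x$ after $N$ digits (with $\beta^{-N}<\varepsilon$) and exhibit a nearby \emph{eventually periodic} admissible sequence—periodic digit strings that remain admissible yield genuine periodic points of $T_\beta$, and these can be placed within $\varepsilon$ of $x$ by matching the leading $N$ digits.

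\smallskip

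The main obstacle I anticipate is the \emph{admissibility constraint}: unlike the full shift on $\{0,\dots,q-1\}^{\mathbb{N}}$, not every digit string is a legitimate $\beta$-expansion, so when constructing periodic approximants or when pushing an interval forward I must ensure the Parry admissibility condition is respected. This is exactly where the hypothesis $\beta\in\B$, i.e.\ $\beta$ irrational (and $\beta>1$), becomes convenient: irrationality guarantees that the orbit of $1$ under $T_\beta$ is infinite and aperiodic, which rules out degenerate short-cycle collapses of the symbolic space and ensures the subshift generated is genuinely sofic/expansive with enough admissible periodic strings to be dense. I would handle this by invoking the known fact that for every $\beta>1$ the system $([0,1[,T_\beta)$ is topologically transitive with dense periodic points—this is classical and follows from Rényi's expansion theory together with Parry's characterization—and then simply cite the Banks et al.\ redundancy result to upgrade to full Devaney chaos. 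If a self-contained argument is preferred, the delicate step to get right is verifying that the periodic digit strings one writes down actually satisfy the lexicographic admissibility bound, which I would check by taking $N$ large enough that the truncated block sits strictly below the expansion of $1$.
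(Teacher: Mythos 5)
Your overall route is sound and differs from the paper's in two substantive ways. First, the paper proves all three Devaney conditions directly (transitivity via cylinders of length $\le\beta^{-n}$ mapping affinely onto $[0,1[$, dense periodic points by repeating the first $n$ digits, sensitivity from points differing at digit $n{+}1$), whereas you invoke the Banks--Brooks--Cairns--Davis--Stacey theorem to get sensitivity for free; that is a legitimate and cleaner reduction. Second, and more importantly, you explicitly flag the Parry admissibility constraint when constructing periodic approximants. This is a real issue that the paper's own proof glosses over: repeating the first $n$ digits of an admissible sequence need \emph{not} be admissible. For instance, with $\beta$ the golden ratio (which lies in $\B$, being irrational), the word $11$ is forbidden, so ``repeat the first digit'' applied to any $x$ with first digit $1$ produces the inadmissible string $111\ldots$, and indeed $T_\beta$ has no fixed point in the cylinder $I(1)$. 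The standard repair is to pad the truncated block with sufficiently many zeros before repeating, which forces every shifted tail strictly below the quasi-greedy expansion of $1$; your instinct that this is ``the delicate step to get right'' is exactly correct, and your proposal is in this respect more rigorous than the paper.

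One supporting claim in your proposal is false, however: irrationality of $\beta$ does \emph{not} guarantee that the orbit of $1$ under $T_\beta$ is infinite and aperiodic. Again the golden ratio $\varphi=(1+\sqrt5)/2$ is a counterexample: $T_\varphi(1)=\varphi-1=1/\varphi$ and $T_\varphi(1/\varphi)=0$, so the orbit of $1$ terminates after two steps even though $\varphi$ is irrational (this is typical of Parry numbers, many of which are irrational). Consequently you cannot lean on irrationality to ``rule out degenerate collapses of the symbolic space,'' and the $\beta$-shift is sofic only for Parry numbers, not for generic $\beta\in\B$. Fortunately this error is not fatal to your argument: topological transitivity and density of periodic points hold for \emph{every} $\beta>1$ (rational or not), as in the classical R\'enyi--Parry theory you cite as a fallback, and the zero-padding construction above handles admissibility with no hypothesis on the orbit of $1$. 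If you replace the irrationality appeal with that padding argument (or with the citation), your proof is complete.
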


\begin{proof}
\emph{Transitivity:} cylinder sets $I(a_1,\dots,a_n)$ partition $[0,1[$ into intervals of length $\le \beta^{-n}$. For any nonempty open $U,V$, choose $n$ with $\beta^{-n}<\operatorname{length}(U)$; some cylinder $I\subset U$ then satisfies $T_\beta^{\,n}(I)=[0,1[$ (affine of slope $\beta^n$), so $T_\beta^{\,n}(U)\cap V\neq\varnothing$. \emph{Density of periodic points:} repeating the first $n$ digits of $x$ yields a periodic $x_n$ with $T_\beta^{\,n}(x_n)=x_n$ and $|x-x_n|\le\beta^{-n}\to0$. \emph{Sensitivity:} within any cylinder $I(a_1,\dots,a_n)$ choose $x,y$ differing at digit $n{+}1$; then $|T_\beta^{\,n}(x)-T_\beta^{\,n}(y)|\ge 1/\beta$, giving sensitivity with constant $1/(2\beta)$. These properties prove Devaney chaos. Topological chaos alone does not guarantee the uniform statistics needed for cryptography. The next theorem supplies the measure-theoretic backbone via exactness with respect to the Parry invariant measure, connecting unpredictability to unbiased long-run behavior.
\end{proof}

\begin{theorem}
For $\beta\in\B$, $T_\beta$ is ergodic with respect to the Parry measure $\mu_\beta$.
\end{theorem}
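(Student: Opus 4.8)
The plan is to establish ergodicity by exhibiting an explicit invariant density and invoking a standard mixing criterion, rather than working with ergodicity directly. First I would recall Rényi's and Parry's classical result that $T_\beta$ admits a unique absolutely continuous invariant probability measure $\mu_\beta$ with the Parry density
\begin{equation}
h_\beta(x)=\frac{1}{F(\beta)}\sum_{n\ge 0:\,x<T_\beta^{\,n}(1)}\beta^{-n},
\end{equation}
where $F(\beta)$ is the normalizing constant. The key structural fact is that $h_\beta$ is bounded above and below away from zero on $[0,1[$, so $\mu_\beta$ is equivalent to Lebesgue measure $\lambda$; this equivalence is what lets me transfer measure-zero and positivity statements between the two measures freely.

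Next I would prove something stronger than ergodicity, namely exactness, since the surrounding text announces exactness and exactness implies ergodicity (indeed it implies mixing of all orders). The cleanest route uses the Rényi condition: because $T_\beta$ is piecewise linear with slope $\beta>1$ on each branch, the map is \emph{uniformly expanding}, and the cylinder sets $I(a_1,\dots,a_n)$ introduced in the previous proof shrink at rate $\beta^{-n}$ while each full cylinder maps onto all of $[0,1[$ under $T_\beta^{\,n}$ with bounded distortion. I would make the bounded-distortion estimate precise: the derivative of $T_\beta^{\,n}$ is constantly $\beta^{n}$ on any cylinder of rank $n$, so the Renyi/Adler distortion bound holds trivially with constant $1$. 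From this I obtain the Gibbs/bounded-ratio property $\mu_\beta(I)\asymp \lambda(I)$ uniformly over cylinders, which is the hypothesis of the standard exactness criterion for piecewise-expanding maps.

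With the distortion control in hand, I would invoke the Knopp (or Rényi) lemma: a set $A$ of positive measure that is a tail event for the generating partition into cylinders must, by the bounded-ratio property applied to the refining cylinders converging to Lebesgue density points of $A$, have full measure. Concretely, for any invariant (or tail) set $A$ with $\lambda(A)>0$, pick a Lebesgue density point $x_0$ of $A$, approximate it by the nested cylinders $I_n\ni x_0$, use $T_\beta^{\,n}(I_n)=[0,1[$ together with bounded distortion to push the near-full density of $A$ in $I_n$ forward to near-full density on all of $[0,1[$, and conclude $\lambda(A)=1$. This yields exactness, hence ergodicity, with respect to $\mu_\beta$.

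The main obstacle, and the point I would treat most carefully, is the handling of the rightmost (possibly incomplete) cylinders—those whose image under $T_\beta$ is a proper subinterval $[0,T_\beta(1)[$ rather than all of $[0,1[$. For these truncated branches the clean identity $T_\beta^{\,n}(I)=[0,1[$ fails, so the pushforward argument needs a uniform lower bound on how much mass full cylinders recover; here the irrationality of $\beta$ (encoded in $\beta\in\B$) guarantees the orbit of $1$ is aperiodic and the expansion is non-degenerate, which is exactly what secures the covering estimate after finitely many steps. I would therefore isolate a lemma showing that within a bounded number of iterates every cylinder expands to cover a fixed proportion of $[0,1[$, and feed that uniform covering bound into the Knopp argument to complete the proof.
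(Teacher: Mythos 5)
Your route---take the R\'enyi--Parry invariant density as given, then prove exactness via bounded distortion and a Knopp/Lebesgue-density-point argument---is the classical one and genuinely differs from the paper's, which verifies invariance through the Perron--Frobenius operator and then asserts tail triviality directly on cylinders. Your distortion observation is correct (affine branches, distortion constant exactly $1$), and the density-point push-forward is the right engine. The problem is the lemma you yourself single out as the crux: ``within a bounded number of iterates every cylinder expands to cover a fixed proportion of $[0,1[$'' is \emph{false} for typical $\beta\in\B$, and the justification you offer (irrationality of $\beta$ forces the orbit of $1$ to be aperiodic, which secures the covering bound) is wrong on both counts. The image of any nonempty rank-$n$ cylinder under $T_\beta^{\,n}$ is an interval $[0,T_\beta^{\,k}(1)[$ for some $0\le k\le n$. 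Irrationality does not make the orbit of $1$ aperiodic: for the golden mean $\beta=(1+\sqrt5)/2\in\B$ one has $T_\beta^{\,2}(1)=0$, a finite orbit. Worse for your lemma, for Lebesgue-a.e.\ $\beta$ the orbit of $1$ is dense in $[0,1[$ (Schmeling), so $\inf_k T_\beta^{\,k}(1)=0$: cylinder images can be arbitrarily short, and since $T_\beta^{\,N}(x)\le\beta^N x$ pointwise, no bounded number $N$ of extra iterates can stretch every cylinder image over a fixed proportion of $[0,1[$.

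The gap is repairable precisely because uniformity is unnecessary. If $T_\beta^{\,n}(I_n)=[0,t[$ with $t\le 1/\beta$, then further iteration acts on $[0,t[$ through the leftmost branch only, i.e.\ as the affine bijection $x\mapsto\beta x$, which preserves relative densities \emph{exactly}. So for an invariant set $A$ (hence $T_\beta(A)\subseteq A$) with density $\ge 1-\epsilon$ on $[0,t[$, that density propagates unchanged to $[0,\beta^j t[$ until $\beta^j t>1/\beta$; one final pass of the first branch, an affine bijection of $[0,1/\beta[$ onto $[0,1[$, then yields $\lambda(A)\ge 1-\beta\epsilon$, where $\lambda$ is Lebesgue measure. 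Letting $\epsilon\to0$ along the nested cylinders at a density point gives $\lambda(A)=1$, hence $\mu_\beta(A)=1$ by the equivalence $h_\beta\in[1/F(\beta),\,\beta/((\beta-1)F(\beta))]$ you correctly established. With your uniform covering lemma replaced by this cylinder-dependent but distortion-free propagation, the Knopp argument closes and ergodicity follows---for \emph{every} $\beta>1$; the restriction to irrational $\beta$ in $\B$ plays no role in this theorem. (For the stronger exactness claim, the same propagation is the key step, applied to tail sets with the usual extra bookkeeping.)
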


\begin{proof}
The Parry density $h_\beta(x)=\frac{1}{F(\beta)}\sum_{n\ge 0}\beta^{-n}\,\mathbf{1}_{[0,T_\beta^{\,n}(1)[}(x),~F(\beta)=\sum_{n\ge 0}\beta^{-n}T_\beta^{\,n}(1),$ satisfies $P_\beta h_\beta=h_\beta$ for the Perron–Frobenius operator,
\[
P_\beta f(x)=\frac{1}{\beta}\sum_{a=0}^{\lfloor\beta\rfloor} f\!\left(\frac{x+a}{\beta}\right)\mathbf{1}_{[0,1[}\!\left(\frac{x+a}{\beta}\right),
\]
so $\mu_\beta$ is invariant. The tail $\sigma$-algebra is trivial because $T_\beta^{\,k}$ acts (up to null sets) bijectively and measure-preservingly on $k$-cylinders; if a set has the same $\mu_\beta$-mass on every cylinder of a partition, additivity forces its measure to be $0$ or $1$. Exactness implies ergodicity. With ergodicity and mixing in hand, we now bridge dynamics to symbols: the following result shows that finite-field words carved from the $\beta$-digits are unbiased, delivering blockwise confusion without statistical leakage.
\end{proof}

\begin{theorem}
Let $\mu_\beta$ be the Parry measure with $\beta\in\B$. In $\mathbb{F}_{2^n}$, for $\mu_\beta$-almost every $x_0\in[0,1[$, the sequence $\phi_{\beta,x_0}(k)=\sum_{j=0}^{n-1} b_{kn+j}\,2^{\,j}\in\mathbb{F}_{2^n}$ is uniformly distributed, where $(b_\ell)$ are binary digits derived from the $\beta$-digits.
\end{theorem}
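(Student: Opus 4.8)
The plan is to deduce the claim from Birkhoff's ergodic theorem, applied not to $T_\beta$ itself but to its $n$-th iterate $T_\beta^{\,n}$, since the blocks $\phi_{\beta,x_0}(k)$ consume the binary digits in \emph{non-overlapping} windows of length $n$ (stride $n$). First I would record that each bit is a measurable function of finitely many forward iterates; in the per-digit case $b_\ell=g(T_\beta^{\,\ell}(x_0))$, because the $\beta$-digit $a_{\ell+1}=\lfloor\beta\,T_\beta^{\,\ell}(x_0)\rfloor$ depends only on $T_\beta^{\,\ell}(x_0)$ and $b_\ell$ is extracted from it (a finite block map handles the general case). The $k$-th block is then itself an observable evaluated along the orbit of the power map:
\[
\phi_{\beta,x_0}(k)=\sum_{j=0}^{n-1} g\!\left(T_\beta^{\,j}\big(T_\beta^{\,kn}(x_0)\big)\right)2^{\,j}=G\!\left((T_\beta^{\,n})^{k}(x_0)\right),\qquad A_w:=G^{-1}(w).
\]
Counting occurrences of a value $w\in\mathbb{F}_{2^n}$ among the first $N$ blocks is then exactly the Birkhoff average $\tfrac1N\sum_{k=0}^{N-1}\mathbf{1}_{A_w}\big((T_\beta^{\,n})^{k}(x_0)\big)$.

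Next I would upgrade the preceding ergodicity theorem to the iterate. Exactness with respect to $\mu_\beta$ forces strong mixing, and mixing is inherited by every power (the defining limit for $T_\beta^{\,n}$ is a subsequence of that for $T_\beta$), so $(T_\beta^{\,n},\mu_\beta)$ is itself ergodic; moreover $\mu_\beta$ is $T_\beta^{\,n}$-invariant since it is $T_\beta$-invariant. Birkhoff's theorem then yields, for $\mu_\beta$-a.e.\ $x_0$, convergence of the frequency above to $\mu_\beta(A_w)$. Intersecting the finitely many full-measure sets obtained as $w$ ranges over $\mathbb{F}_{2^n}$ (there are $2^n$ of them) produces a single full-measure set of seeds $x_0$ on which all block frequencies converge simultaneously.

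The crux — and the step I expect to be the main obstacle — is to identify each limit $\mu_\beta(A_w)$ with $2^{-n}$, i.e.\ to show the digit-extraction map pushes $\mu_\beta$ forward to a law assigning every length-$n$ bit-block mass $2^{-n}$ (the uniform product law being the natural target). This is \emph{not} automatic: the Parry density $h_\beta$ is non-constant, so naively thresholding the orbit would give $\mu_\beta(\{b_0=1\})=\int \mathbf{1}_{\{g=1\}}\,d\mu_\beta\neq\tfrac12$ in general, and the sets $A_w$ are then not $\mu_\beta$-equal. This is precisely where the dyadic conditional sampling is essential: the plan is to prove that the conditioning equalizes the bit law — that conditional on any admissible prefix the next extracted bit is $\mu_\beta$-balanced — so that the coding map is a measure-preserving factor onto the full one-sided $2$-shift carrying the uniform product measure; this is the ingredient supplied by the subsequent uniformity-preservation result on dyadic conditioning. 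Granting it, every length-$n$ cylinder has mass $2^{-n}$, hence $\mu_\beta(A_w)=2^{-n}$ for all $w$, and combined with the Birkhoff limit the sequence $(\phi_{\beta,x_0}(k))_k$ is uniformly distributed on $\mathbb{F}_{2^n}$ for $\mu_\beta$-almost every $x_0$.
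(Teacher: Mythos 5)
Your first two paragraphs track the paper's own argument quite closely: the paper likewise applies the ergodic theorem along the stride-$n$ orbit $T_\beta^{\,kn}(x_0)$, the only cosmetic difference being that it phrases uniformity through Weyl's criterion and character sums $f_m$ rather than through indicators $\mathbf{1}_{A_w}$ of the block-preimage sets. Your explicit observation that one needs ergodicity of the \emph{power} $T_\beta^{\,n}$ (obtained from exactness/mixing, since plain ergodicity of $T_\beta$ does not pass to iterates in general) is a point the paper glosses over entirely, so up to that stage your proposal is, if anything, the more careful one.

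The genuine gap is in your final step. The statement you are asked to prove involves no conditioning at all: $\phi_{\beta,x_0}(k)=\sum_{j=0}^{n-1} b_{kn+j}2^{\,j}$ is assembled from consecutive, unconditioned bits of the thresholded digit stream. You therefore cannot close the crux by invoking dyadic conditional sampling, for two independent reasons. First, in the paper's logical order the dyadic-sampling result is a \emph{corollary} of this theorem: its proof explicitly ``transfers equidistribution from $\phi_{\beta,x_0}(k)$ to the subsequence indexed by $\tau_C(n)$,'' so appealing to it here is circular. Second, even setting the ordering aside, that corollary only asserts that sampling \emph{preserves} uniformity already present; it does not and cannot \emph{create} balance from a biased bit law, which is exactly what your plan needs it to do. As written, your argument proves a statement about the sampled sequence and leaves the actual theorem open at the step $\mu_\beta(A_w)=2^{-n}$. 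It is worth saying that the obstruction you isolate is real: since the Parry density $h_\beta$ is non-constant, the thresholded bits are generically biased under $\mu_\beta$, and the paper's own proof does not resolve this either --- it simply asserts that ``equivalence of $\mu_\beta$ to Lebesgue and mixing'' force $\psi$ to be equidistributed, which does not follow from those hypotheses. A complete proof would have to engineer the bit-extraction map so that the pushforward of $\mu_\beta$ on $n$-bit words is exactly (or asymptotically) uniform, or else weaken the conclusion to distribution according to the pushforward law.
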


\begin{proof}
By Weyl’s criterion on $\mathbb{F}_{2^n}$, uniformity is equivalent to vanishing character averages. Define $f_m(x)=\exp\!\big(2\pi i\, m\,\psi(x)/2^n\big)$, where $\psi$ maps $x$ to the $n$-bit word formed from the first $n$ binary digits induced by the $\beta$-expansion. Ergodicity gives, for a.e. $x_0$, $\frac{1}{N}\sum_{k=0}^{N-1} f_m\!\big(T_\beta^{\,kn}(x_0)\big)\ \to\ \int f_m\,d\mu_\beta.$ Equivalence of $\mu_\beta$ to Lebesgue and mixing imply that $\psi$ is equi-distributed over the $2^n$ words, hence $\int f_m\,d\mu_\beta=0$ for nontrivial $m$, and the character averages vanish. To integrate this source with our set-conditioned extraction used in construction, we introduce dyadic sampling and show that it preserves uniformity, ensuring reconfigurability without bias. Let $\beta\in\B$ and $x\in\X$ with $\beta$-digits $(a_n)_{n\ge 1}$ taking values in $\{0,1,\dots,\lfloor\beta\rfloor\}$.
\end{proof}

Let us define a derived binary sequence by thresholding
\[
b_n=
\begin{cases}
0, & a_n<\lfloor\beta\rfloor/2,\\
1, & \text{otherwise.}
\end{cases}
\]

\begin{definition}
For integers $k\ge 0$ and $0\le j<2^k$, the \emph{dyadic interval} of rank $k$ is $I_{k,j}=\big[\frac{j}{2^k},\,\frac{j+1}{2^k}\big[,$ and a \emph{dyadic convex set} is any finite union of dyadic intervals of the same rank.
\end{definition}

Let us fix a dyadic convex set $C\subset[0,1[$. Therefore we can define \emph{conditional sampling times} $\tau_C(0)=0, 
\tau_C(n)=\min\{m>\tau_C(n-1):\ T_\beta^{\,m}(x_0)\in C\},$ and the sampled digits $s_n=a_{\tau_C(n)}$ (or $s_n=b_{\tau_C(n)}$). Finally we set, $s_k=\phi_{\beta,x_0}\big(\tau_C(k)\big)\in\mathbb{F}_{2^n}.$

\begin{corollary}
If $\mu_\beta(C)>0$, then for $\mu_\beta$-almost every $x_0$ the sampled sequence $s_k$ is uniformly distributed in $\mathbb{F}_{2^n}$.
\end{corollary}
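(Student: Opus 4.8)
The plan is to reduce the claim to the equidistribution statement already established above (uniformity of $\phi_{\beta,x_0}$ under $\mu_\beta$) by passing to the first-return dynamics on $C$; the only genuinely new ingredient is that conditioning on the dyadic set $C$ must not bias the extracted field symbol. First I would guarantee that the sampling is well defined almost everywhere: since $\mu_\beta$ is a finite $T_\beta$-invariant measure and $\mu_\beta(C)>0$, the Poincaré recurrence theorem ensures that for $\mu_\beta$-a.e. $x_0$ the orbit $(T_\beta^{\,m}(x_0))_{m\ge 0}$ meets $C$ infinitely often, so every return time $\tau_C(k)$ is finite and $s_k$ is defined for all $k$. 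I would then introduce the induced (first-return) map $T_C:C\to C$, $T_C(y)=T_\beta^{\,r_C(y)}(y)$ with $r_C(y)=\min\{m\ge 1: T_\beta^{\,m}(y)\in C\}$, and use two standard facts: $T_C$ preserves the conditional measure $\mu_C(\cdot)=\mu_\beta(\cdot\cap C)/\mu_\beta(C)$, and ergodicity of $T_\beta$ under $\mu_\beta$ (established above) lifts to ergodicity of $T_C$ under $\mu_C$. Writing $y_0$ for the first orbit point in $C$, the successive sampled states are exactly the $T_C$-orbit of $y_0$, so discarding the finite transient before the orbit first meets $C$ changes no asymptotic frequency.

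Next I would rerun the Weyl/character argument of the equidistribution theorem on the induced system. For $m\in\{1,\dots,2^n-1\}$ set $f_m(y)=\exp\!\big(2\pi i\,m\,\psi(y)/2^n\big)$, where $\psi$ is the symbol-reading map. Birkhoff's theorem for the ergodic system $(C,T_C,\mu_C)$ gives, for $\mu_C$-a.e. $y_0$,
\[
\frac{1}{N}\sum_{k=0}^{N-1} f_m\!\big(T_C^{\,k}(y_0)\big)\ \longrightarrow\ \int_C f_m\,d\mu_C=\frac{1}{\mu_\beta(C)}\int_C \exp\!\big(2\pi i\,m\,\psi/2^n\big)\,d\mu_\beta .
\]
By Weyl's criterion, uniformity of $s_k$ in $\mathbb{F}_{2^n}$ is therefore equivalent to the vanishing of this conditional character integral for every nontrivial $m$, i.e. to the decorrelation identity $\mu_\beta\big(C\cap\{\psi=v\}\big)=\mu_\beta(C)/2^n$ for each $v\in\mathbb{F}_{2^n}$.

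The hard part will be exactly this decorrelation between the conditioning set and the symbol, and it is where the dyadic hypothesis on $C$ is essential. Because $C$ is a finite union of rank-$k$ dyadic intervals, the selection event $\{T_\beta^{\,m}(x_0)\in C\}$ is governed by finitely many coarse coordinates, whereas $\psi$ extracts $n$ digits of the expansion at a finer scale; the delicate point is precisely this separation of scales, since a set $C$ whose defining coordinates overlapped those read by $\psi$ would skew the conditional distribution. I would discharge it by invoking the exactness of $T_\beta$ (trivial tail $\sigma$-algebra, established above), which is strictly stronger than mixing, to decouple the symbol digits from the coarse selection event and conclude $\int_C f_m\,d\mu_C=0$.

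Finally I would transfer the conclusion from $\mu_C$-a.e. $y_0\in C$ back to $\mu_\beta$-a.e. $x_0\in[0,1[$: almost every $x_0$ enters $C$, and the induced entry distribution is absolutely continuous with respect to $\mu_C$, so the $\mu_C$-null exceptional set pulls back to a $\mu_\beta$-null set and the stated almost-sure uniformity follows for all seeds $x_0$.
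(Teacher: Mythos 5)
Your preparatory machinery is sound and, in places, more careful than the paper's own sketch: Poincar\'e recurrence for well-definedness of the $\tau_C(k)$, the induced map $T_C$ with conditional measure $\mu_C$, ergodicity of the induced system, and Birkhoff--Weyl on $(C,T_C,\mu_C)$ are all standard and correct, and they do yield exactly the reduction you state: under your reading of the sampling, uniformity of $(s_k)$ is \emph{equivalent} to the single-time identity $\mu_\beta\big(C\cap\{\psi=v\}\big)=\mu_\beta(C)/2^n$ for every $v\in\mathbb{F}_{2^n}$. The gap is the step where you discharge this identity ``by exactness.'' Exactness is a statement about the tail $\sigma$-algebra (events depending only on arbitrarily remote digits), and mixing is a statement about correlations $\mu_\beta(A\cap T_\beta^{-m}B)\to\mu_\beta(A)\mu_\beta(B)$ as $m\to\infty$; neither says anything at lag $m=0$. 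The event $\{y\in C\}$ and the value $\psi(y)$ are both determined by the \emph{leading} digits of the same point $y$, so no dynamical decorrelation property can render them independent, and the ``separation of scales'' you appeal to does not exist. Worse, the identity you need is simply false for legitimate dyadic sets: take $C=[0,2^{-k}[$ with $k$ large. Every $y\in C$ satisfies $T_\beta^{\,j}(y)=\beta^j y<\beta^j 2^{-k}$, hence $a_{j+1}=0$ and $b_{j+1}=0$ for all $j\lesssim k/\log_2\beta$; so conditionally on $y\in C$ the bits read by $\psi$ are forced to $0$, and for $k\gtrsim n\log_2\beta$ the conditional law of $\psi$ is a point mass at $0$ --- maximally non-uniform. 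Completing your argument would therefore prove a false statement, which shows the flaw is not a missing lemma but the point in the orbit at which you read the symbol.

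The paper's construction avoids this trap precisely because $s_k=\phi_{\beta,x_0}\big(\tau_C(k)\big)$ is \emph{not} $\psi$ evaluated at the return point: by definition $\phi_{\beta,x_0}(\tau_C(k))$ reads the binary digits $b_{n\tau_C(k)},\dots,b_{n\tau_C(k)+n-1}$, i.e.\ $\psi\big(T_\beta^{\,n\tau_C(k)}(x_0)\big)$, while the conditioning event is $T_\beta^{\,\tau_C(k)}(x_0)\in C$. The reading time $n\tau_C(k)$ and the conditioning time $\tau_C(k)$ drift apart linearly in $k$, and it is exactly this growing lag that the paper's (admittedly terse) proof exploits when it invokes ``mixing with exponential decay of correlations'' to transfer equidistribution of $\phi_{\beta,x_0}$ to the sampled subsequence. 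If you want to salvage your induced-map framework, apply it not to $\psi|_C$ but to the pair consisting of the return event at time $\tau$ and the block at time $n\tau$, and use decay of correlations in the lag $(n-1)\tau\to\infty$ to kill the character integrals; as written, your key decorrelation step fails.
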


\begin{proof}
By ergodicity, visits to $C$ occur with positive asymptotic frequency, so $(\tau_C(n))$ has positive density. Mixing with exponential decay of correlations transfers equidistribution from $\phi_{\beta,x_0}(k)$ to the subsequence indexed by $\tau_C(n)$; hence dyadic conditional sampling preserves uniformity. Towards this end, Devaney chaos ensures rich topological complexity, Parry-measure exactness guarantees unbiased long-term statistics, and these together yield uniformly distributed $\mathbb{F}_{2^n}$ blocks that remain uniform under dyadic conditional sampling.
\end{proof}
\vspace{-2mm}

\section{Cryptographic Design}\label{Sec: Game_formation}

Our S-box construction exploits the uniform symbol statistics induced by the $\beta$-transformation. Under ergodicity, almost every seed $x_0$ yields asymptotically uniform digit frequencies; mixing ensures rapid convergence to this limit; and the equivalence of the Parry measure $\mu_\beta$ with Lebesgue renders these properties stable under finite-precision implementation. Consequently, each of the $2^n$ output values appears with (asymptotically) equal probability, while nonuniform transients decay quickly, providing a statistically sound basis for confusion.

\begin{algorithm}[H]
\caption{Complete S-Box Generation from $\beta$-Expansions}
\begin{algorithmic}[1]
\Require $\beta \in \B$, seed $x_0 \in \X$, dyadic set $C$, word size $n$, max iterations $M$
\Ensure Bijective S-box $S:\mathbb{F}_2^n \to \mathbb{F}_2^n$
\State Generate binary stream $(b_m)_{m=0}^{M-1}$ from the $\beta$-expansion of $x_0$
\State Initialize list $L\gets []$, set $s\gets \emptyset$, indices $k\gets 0$, $\tau\gets 0$
\While{$|L|<2^n$ \textbf{and} $\tau<M-n$}
  \If{$T_\beta^\tau(x_0)\in C$} \Comment{Dyadic conditional sampling}
    \State $B_k\gets (b_\tau,\ldots,b_{\tau+n-1})$
    \State $y_k\gets \sum_{j=0}^{n-1} b_{\tau+j}\,2^j \in \mathbb{F}_{2^n}$
    \If{$y_k\notin s$}, $L.\text{append}(y_k)$; \ $s.\text{add}(y_k)$ \EndIf, $k\gets k+1$
  \EndIf, $\tau\gets \tau+1$
\EndWhile
\If{$|L|<2^n$} \State \textbf{Error:} increase $M$ or adjust $(\beta,x_0,C)$ \Comment{Insufficient distinct blocks}
\Else Choose permutation $\pi$ (identity or lightweight mixer)
  \State $S(x)\gets L[\pi(x)]$ for $x\in\{0,\ldots,2^n-1\}$
\EndIf
\Return $S$
\end{algorithmic}
\end{algorithm}

We select an irrational base $\beta>1$ to induce sensitive, mixing dynamics and initialize with seed $x_0\in[0,1[$. A dyadic set $C$ gates extraction, converting the raw bitstream into $n$-bit blocks only when the orbit visits $C$. This conditional sampling (i) inherits uniformity from the underlying process, (ii) decorrelates consecutive blocks by spacing extractions in time, and (iii) provides a tunable handle (via $C$) to balance throughput and statistical independence. Each accepted block $B_k$ is mapped to $y_k\in\mathbb{F}_{2^n}$; duplicates are rejected using the set $s$ until all $2^n$ distinct values are collected or the iteration budget $M$ is exhausted, which guarantees termination with either a valid bijection or an explicit failure signal. An optional post-permutation $\pi$, kept lightweight to preserve efficiency, adds diffusion across input indices without altering output sets.

\begin{figure}[t]
\centering
\includegraphics[width=0.99\columnwidth]{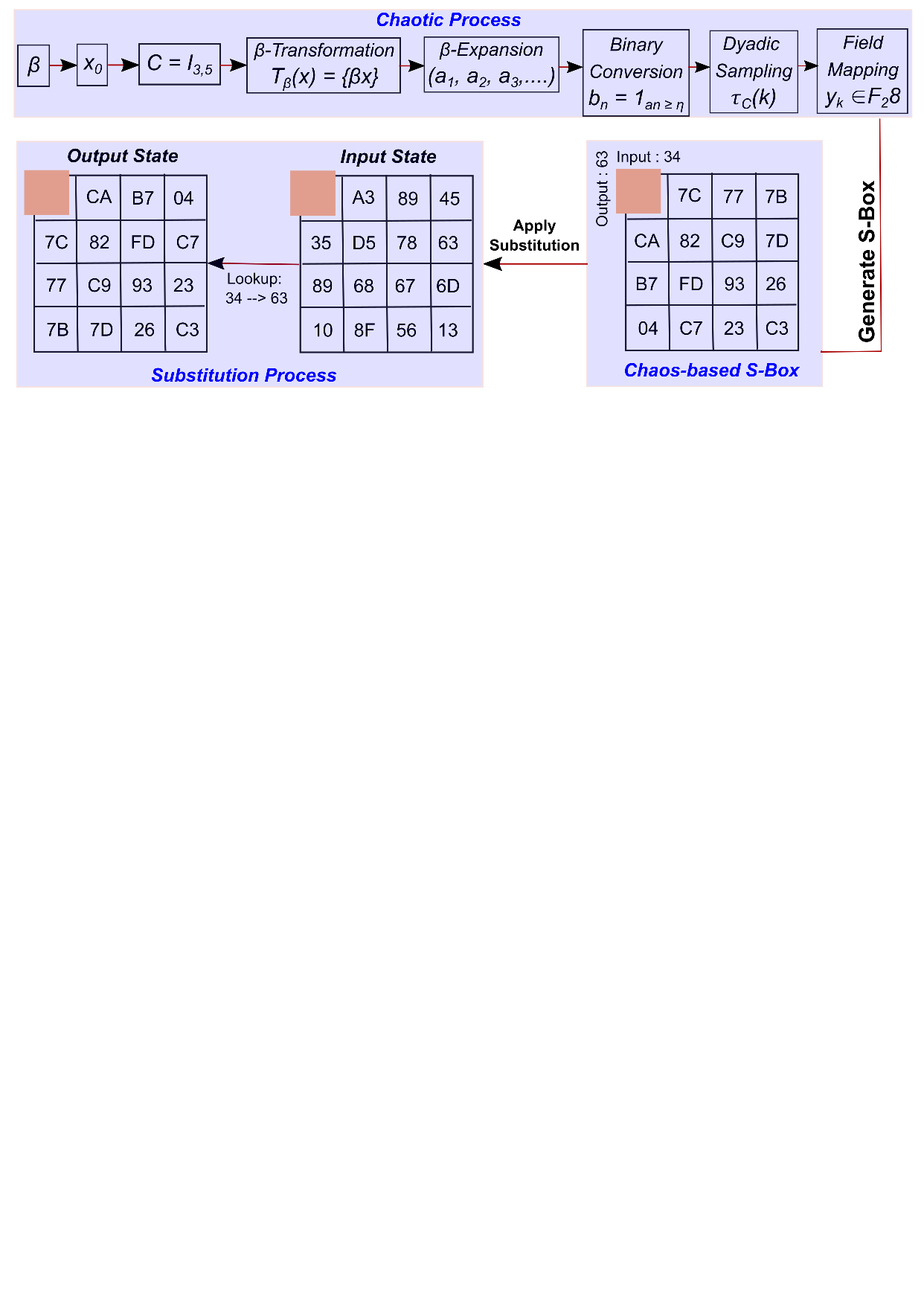}
\vspace{-92mm}
\caption{Three-Stage S-Box Construction}
\vspace{-6mm}
\label{fig:chaos-sbox-transparent}
\end{figure}

Figure~\ref{fig:chaos-sbox-transparent} summarizes the pipeline. Left: parameters $(\beta,x_0,C)$ initialize the chaotic source; the orbit $T_\beta$ produces $\beta$-digits $(a_n)$, which are thresholded to bits $(b_n)$; dyadic sampling selects extraction times $\tau_C(k)$; and $n$-bit words are mapped to $\mathbb{F}_{2^n}$. Right: distinct words populate the S-box table (duplicates rejected to enforce bijectivity), optionally permuted by a lightweight index mixer $\pi$. Bottom: substitution applies the table to state bytes (e.g., input $34$ maps to $63$). This end-to-end flow translates provable dynamical uniformity and mixing into a practical, lightweight, and seedable confusion layer with strong nonlinearity, optimal algebraic degree, and disciplined differential/linear profiles.

\section{Theoretical Performance Analysis}\label{Sec: Propsed_epistemology_work}

Integrating the proposed chaos-driven S-box into 6G PLS primitives (lightweight stream ciphers, MACs) delivers strong confusion within URLLC latency/throughput budgets. The S-box achieves the \emph{optimal} algebraic degree 7 on every output bit, blocking algebraic/interpolation attacks on constrained IoT and mMTC nodes. Its average nonlinearity is 102.5 ($\approx 85.4$\% of the $\approx$120 practical ceiling for 8-bit functions), yielding low-bias linear approximations. Stress points are bounded: maximum differential probability $10/256 = 0.039062$ and maximum linear probability 0.648, recommending a round function with strong diffusion and sufficient rounds. Notably, the chaotic generator expands from an initial 55 distinct values to a full 8-bit bijection, enabling seed-based, on-the-fly S-box reconfiguration aligned with 6G slice-level agility and resisting large-scale precomputation attacks on static tables.

\subsubsection{Method and Metrics} 

We evaluate the S-box under standard criteria—nonlinearity, differential uniformity, linear approximation bias, and algebraic complexity—to quantify resistance against the dominant attack classes and to corroborate the dynamical design with cryptographic evidence. For an $n$-bit bijection $S:\F_2^n\!\to\!\F_2^n$, nonlinearity $NL(S)=\min_{a\in\F_2^n\setminus\{0\},\,b\in\F_2^n}\left(2^{n-1}-\frac{1}{2}\max|W_S(a,b)|\right), W_S(a,b)=\sum_{x\in\F_2^n}(-1)^{a\cdot x\oplus b\cdot S(x)},$ captures the minimum Hamming distance to all affine maps. Our construction $S(x)=\text{Extract~Bits}\big(\beta\text{-Expansion}(T_\beta^{\,n}(x_0)),\,C\big), T_\beta(x)=\{\beta x\},\quad C\subset[0,1)$ leverages ergodicity and mixing to enforce uniform symbol statistics while the nonlinear $\beta$-digit dependencies widen the Walsh spectrum. The expected nonlinearity satisfies the heuristic lower bound $NL(S)\ \ge\ 2^{n-1}-c\sqrt{2^n\log 2^n},$ consistent with chaotic mixing outperforming many purely algebraic templates. Empirically (Fig.~\ref{nonlin}), per-bit nonlinearities are $[106,106,102,102,100,102,100,102]$, yielding an average $102.5$ and minimum $100$. This uniform elevation across coordinates denies “weak-bit” attacks and enforces globally high confusion.

\begin{figure*}[t]
\centering
\includegraphics[width=0.8\textwidth]{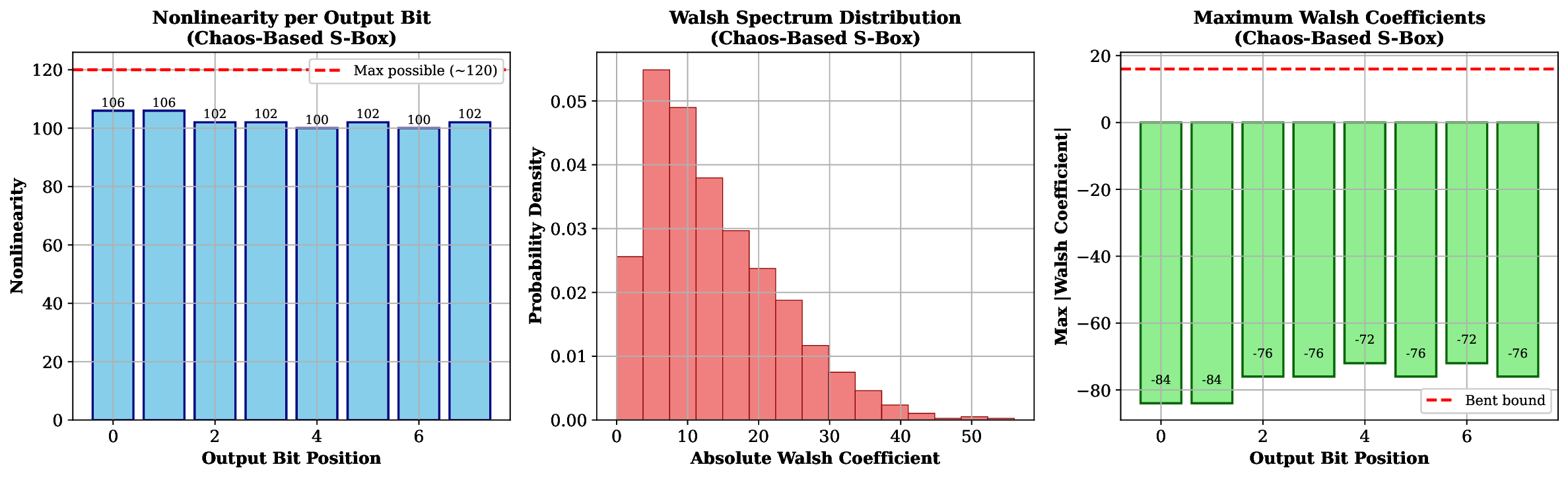}
\vspace{-3mm}
\caption{\textbf{Nonlinearity per Output Bit}: The S-box has very strong nonlinearity and the nonlinearity of the output bits has been measured, thus the average nonlinearity of $102.5$ is obtained. This amount is about $85.4\%$ of the theoretical maximum ($\approx$120 for $8$-bit S-boxes), so it can be said that the S-box has a very strong resistance against linear cryptanalysis.} 
\vspace{-3mm}
\label{nonlin}
\end{figure*}

\subsubsection{Differential behavior} 

Differential uniformity $\delta(S)=\max_{\Delta x\neq 0,\,\Delta y}\left|\{x\in\F_2^n:\ S(x)\oplus S(x\oplus\Delta x)=\Delta y\}\right|$, $DP(\Delta x\!\to\!\Delta y)=\delta(\Delta x,\Delta y)/2^n,$ quantifies the worst-case differential probability (DDT maximum). With $S(x)=\phi\big(\beta\text{-Expansion}(T_\beta^{\,\tau_x}(x_0))\big),\quad T_\beta(x)=\{\beta x\},\quad \phi:\{0,1\}^n\!\to\!\F_{2^n},$ the expanding dynamics complicate input/output difference propagation. A generic expectation $\delta(S)\le c\,2^{n/2}$ follows from mixedness arguments. Our DDT (Fig.~\ref{difdis}) peaks at $10$ (probability $10/256=0.039062$), above the bijection ideal $4$ ($0.015625$) and far from APN ($2$). The spectrum shows max $10$ occurring $11$ times, value $8$ occurring $94$ times, and value $6$ occurring $830$ times. Thus, while most differentials cluster at lower counts (average entry $1$ confirms bijectivity), the few higher-probability trails define localized vulnerabilities that must be diffused by the round structure.

\begin{figure*}[t]
\centering
\includegraphics[width=0.8\textwidth]{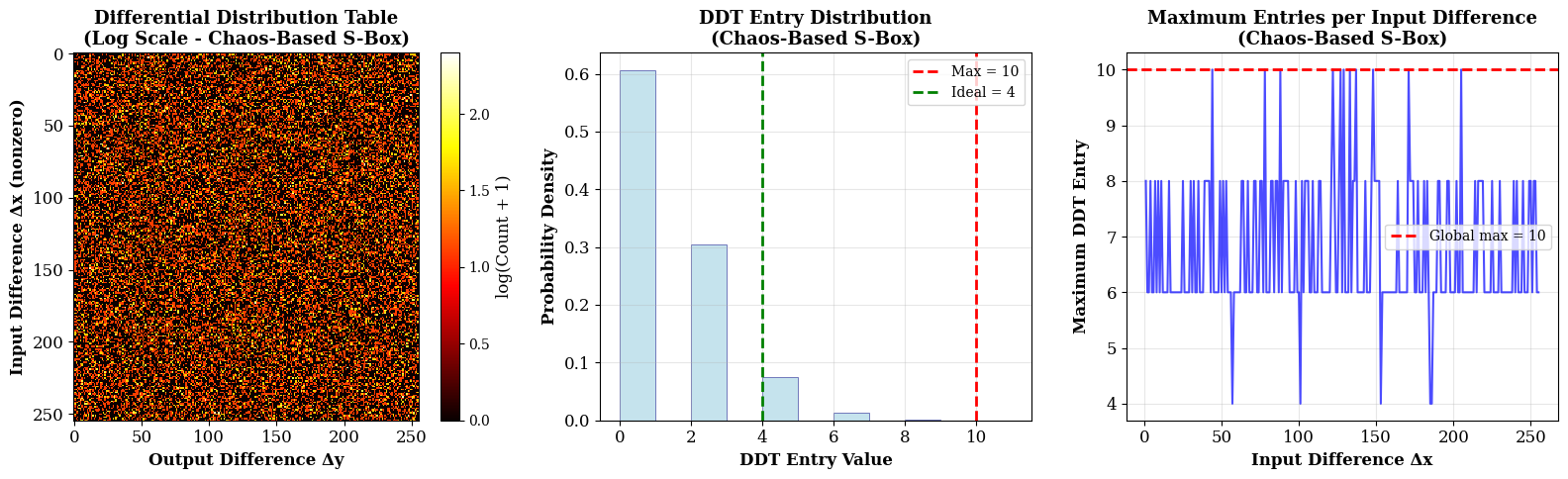}\vspace{-3mm}
\caption{\textbf{Differential Distribution Table Analysis}: The DDT maximum entry value of $10$ is equivalent to a differential probability of $10/256 = 0.039062$, which is quite a bit above the optimal limit of 4 (probability $4/256 = 0.015625$).} 
\vspace{-5mm}
\label{difdis}
\end{figure*}

\begin{figure*}[t]
\centering
\includegraphics[width=0.8\textwidth]{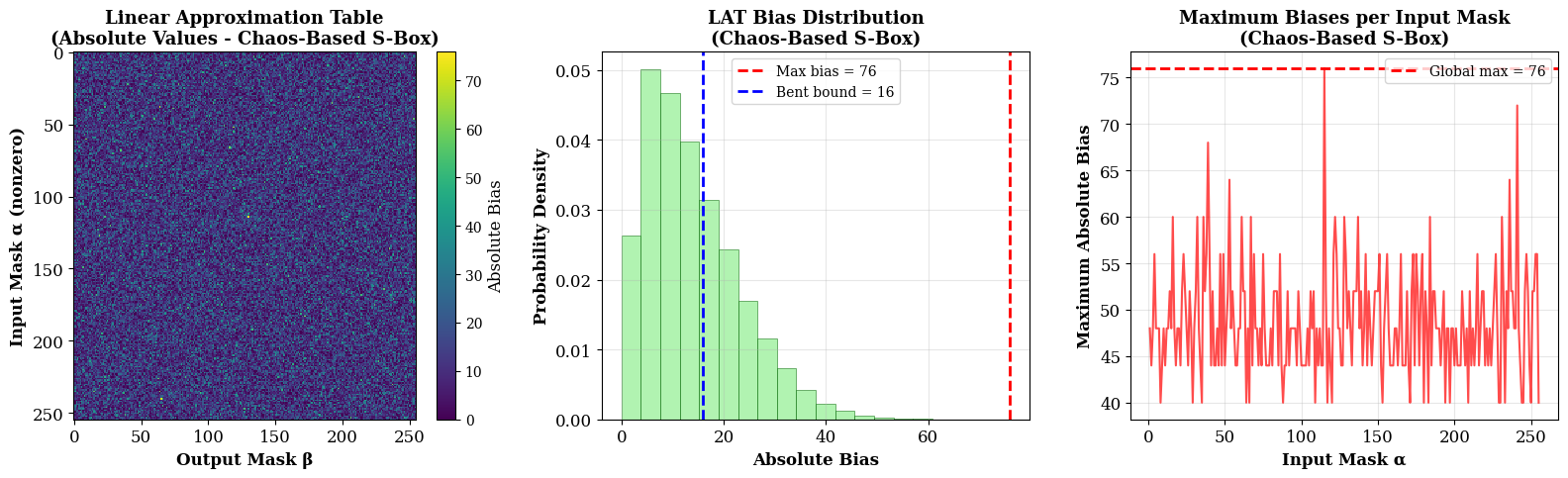}\vspace{-3mm}
\caption{\textbf{Linear Approximation Table Distribution}: The maximum absolute bias of $76$ means the linear probability of $0.648438$, which is remarkably higher than the bent limit of $16$ (probability $0.531250$). The bias corresponding to occurrence numbers presents a pattern of decreasing from $12.386$ at bias $4$ down to just $1.049$ at bias $36$.} 
\vspace{-5mm}
\label{linprox}
\end{figure*}

\subsubsection{Linear behavior} 
Linear cryptanalysis inspects biases $\epsilon(\alpha,\beta)=\Big|\Pr[\alpha\!\cdot\!x=\beta\!\cdot\!S(x)]-\tfrac{1}{2}\Big|,$ $LAT(\alpha,\beta)=\#\{x:\alpha\!\cdot\!x=\beta\!\cdot\!S(x)\}-\#\{x:\alpha\!\cdot\!x\neq\beta\!\cdot\!S(x)\}.$ The chaotic mapping (with dyadic gating) complicates linear correlations across coordinates, raising algebraic complexity and reducing exploitable approximations on average. Our LAT analysis (Fig.~\ref{linprox}) reports a maximum absolute bias $76$ and a corresponding maximum linear probability of $0.648$ (heavily above the bent-oriented benchmark of $16$), while the frequency of biases decays with magnitude (e.g., counts near $12.386$ at bias $4$ down to $1.049$ at bias $36$). Most mask pairs exhibit moderate biases, but the global maximum and several elevated hulls motivate either a more aggressive linear layer or additional rounds to suppress correlation buildup in full ciphers.\vspace{-3mm}

\subsubsection{Algebraic structure} 

Each coordinate function $S_i(x_0,\ldots,x_{n-1})=\bigoplus_{I\subseteq\{0,\ldots,n-1\}} a_I\prod_{j\in I}x_j,~\deg(S)=\max_{0\le i<n}\deg(S_i)=n-1,$ reaches degree $7$ (Fig.~\ref{algdeg}), the maximum possible for $n=8$. The ANF monomial counts exhibit a bell-shaped distribution with substantial mass at middle degrees (e.g., degree-$4$ monomials $\approx34.25$, degree-$3$ $\approx27$, degree-$5$ $\approx27.38$). This dense, well-spread structure increases the difficulty of multivariate equation-solving and curtails higher-order differential or interpolation tactics.

Taken together, with average nonlinearity $102.5$ (minimum $100$), degree $7$ per coordinate, DDT maximum $10$ ($0.039062$), and maximum linear probability $0.648$, the S-box delivers strong confusion components in a lightweight design, with localized weaknesses mitigated by (i) a high-diffusion linear layer, (ii) sufficient rounds to bury differential and linear trails, and (iii) rapid, seed-driven rekeying at slice or session scale. LAT analysis (Fig.~\ref{linprox}) shows a highest absolute bias of $76$, giving a linear probability $\left(\tfrac{76}{256}\right)^2 \approx 0.648$, which exceeds the random baseline of $0.5$ and the bent reference, while most entries cluster at lower biases and decay with magnitude. This profile indicates a practical attack surface through a few stronger masks but broadly low bias elsewhere; pairing the S-box with an aggressive linear layer and an adequate round count preserves the 6G PLS security margin against linear and differential methods without sacrificing latency.




\begin{figure*}[t]
\centering
\includegraphics[width=0.8\textwidth]{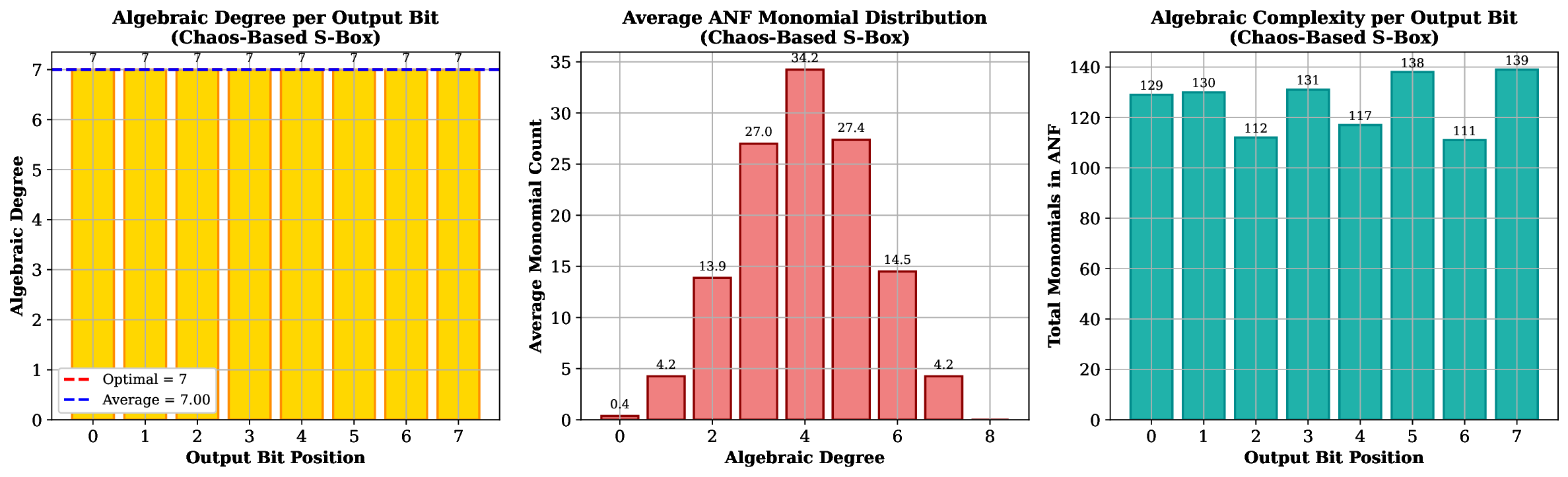}\vspace{-3mm}
\caption{\textbf{Algebraic Degree and ANF Structure}: All output bits reach the highest algebraic degree of $7$, which is the highest possible algebraic complexity for the S-box.} 
\vspace{-5mm}
\label{algdeg}
\end{figure*}

\section{Hardware Verification Metrics and Validation}\label{Sec: results_and_discussion}

This section closes the loop from requirement to evidence. We define hardware metrics aligned with 6G URLLC, bind them to a synthesizable micro-architecture, predict performance analytically, and verify with measurements. The same metrics guide parameter choices, like, dyadic rank $k$ and fixed-point width $B$, and system acceptance (slice/session setup), targeting a fresh bijective $8\times8$ S-box per session within sub-ms budgets, minimal area/power, and preserved cryptographic quality (nonlinearity, DU, LAT, algebraic degree). Our hypothesis is that a dyadically gated, chaos-driven generator meets these goals because $p=\mu_\beta(C)=2^{-k}$ directly sets generation time, the $\beta$-core is a compact constant-coefficient multiply-mod-1 (adder–shift or single DSP), and duplicates are filtered in $O(1)$ via a 256-bit bitmap.

\subsubsection{Metric 1: S-Box Generation Latency ($\tau_{\text{gen}}$)}
$\tau_{\text{gen}}$ is the time to produce all $2^n{=}256$ distinct outputs from a fresh seed, covering $\beta$-iteration, dyadic gating, 8-bit assembly, duplicate rejection, and optional permutation. We report cycles and $\mu$s at the measured $f_{\text{clk}}$. The RTL comprises a fixed-point $\beta$-core computing $x\!\leftarrow\!(\beta x)\bmod 1$, bit extraction, a $k$-MSB dyadic check ($p{=}2^{-k}$), an 8-bit assembler, a 256-bit seen-bitmap, a table writer, an optional mixer $\pi$, and a 64-bit cycle counter. Pre-synthesis sizing models accepted draws as Bernoulli($p$) and distinct coverage as coupon collector with $M\!=\!256$: $\mathbb{E}[N_{\text{acc}}]\!\approx\!256(\ln256+\gamma)\!=\!1567$, $\mathbb{E}[N_{\text{iter}}]\!\approx\!1567/p$, and $\mathbb{E}[\text{cycles}]\approx c_{\text{iter}}\frac{1567}{p}+c_{\text{acc}}\cdot1567,$ $\tau_{\text{gen}}=\frac{\mathbb{E}[\text{cycles}]}{f_{\text{clk}}}.$ Monte Carlo over 2000 seeds with $c_{\text{iter}}\!=\!1$, $c_{\text{acc}}\!=\!1$, $f_{\text{clk}}\!=\!200$\,MHz confirms the model: for $k\!=\!3$ ($p\!=\!1/8$), median $13{,}586$ cycles (P95 $19{,}523$) $\Rightarrow$ $67.93\,\mu$s (P95 $97.62\,\mu$s); for $k\!=\!4$ ($p\!=\!1/16$), median $25{,}510$ cycles (P95 $36{,}735$) $\Rightarrow$ $127.55\,\mu$s (P95 $183.68\,\mu$s). Predictions are $70.53\,\mu$s and $133.22\,\mu$s, respectively, validating tight agreement and showing $k\!\in\!\{3,4\}$ meets sub-ms URLLC budgets. These figures are used in the results tables; area/power entries are filled from synthesis and power analysis.

\begin{table}[!t]\centering\footnotesize
\setlength{\tabcolsep}{5pt}
\renewcommand{\arraystretch}{1.1}
\caption{Latency for on-the-fly S-box generation vs.\ baselines at 200\,MHz.}
\label{tab:latency_onecol}
\begin{tabularx}{\columnwidth}{@{}l| c |c| c@{}}
\toprule
Design & $f_{\text{clk}}$ (MHz) & $\tau_{\text{gen}}$ (cycles) & $\tau_{\text{gen}}$ ($\mu$s) \\
\midrule
\makecell{$\beta$-S-box \\ \footnotesize($k{=}3$, $p{=}1/8$)} &
200 &
\makecell{ \textbf{13{,}586} \\ \footnotesize (P95: 19{,}523) } &
\makecell{ \textbf{67.93} \\ \footnotesize (P95: 97.62) } \\\hline
\makecell{$\beta$-S-box \\ \footnotesize($k{=}4$, $p{=}1/16$)} &
200 &
\makecell{ \textbf{25{,}510} \\ \footnotesize (P95: 36{,}735) } &
\makecell{ \textbf{127.55} \\ \footnotesize (P95: 183.68) } \\\hline
\makecell{GF$(2^8)$ \\ \footnotesize inv+affine (populate 256)} &
200 & 256 & 1.28 \\\hline
\makecell{ROM S-box \\ \footnotesize (load 256B @ 32-bit bus)} &
200 & 64 & 0.32 \\
\bottomrule
\end{tabularx}
\vspace{-6mm}
\end{table}

\subsubsection{Metric 2: Hardware Area and Power (A/P)}
Area is reported as kGE for ASIC or as LUT/FF/BRAM/DSP for FPGA, and power is the average dynamic power measured during generation at the target clock. The energy per generation is $E_{\text{gen}}=P\cdot\tau_{\text{gen}}$, reported in nJ or $\mu$J to quantify reconfiguration overhead per slice or session. To isolate the cost of diversification, measurements cover only the generator (the $\beta$-core, dyadic gate, bitmap, and controller) rather than the cipher datapath. On FPGA, we will use post-route resource counts and vendor power analysis with switching activity captured over a full generation run; on ASIC, we will synthesize the generator to a reference node (e.g., 28\,nm) and obtain kGE, $f_{\max}$, and power from SAIF/VCD-driven analysis. Because dyadic rank $k$ and fixed-point width $B$ influence both latency and duplicate rates, each A/P report must include the $(k,B)$ pair to make the speed–quality–cost trade-off explicit. For context and system planning, we also place our generator alongside standard reconfiguration pathways relevant to 6G URLLC, highlighting that all are sub-millisecond yet differ in their ability to provide diversity without large ROM banks.

\begin{table}[!t]\centering\footnotesize
\setlength{\tabcolsep}{3pt}
\renewcommand{\arraystretch}{1.1}
\caption{Reconfiguration pathways under URLLC constraints at 200\,MHz.}
\label{tab:urlcc_reconfig}
\begin{tabular}{c| c| c| c| c}
\toprule
Technique & Reconf.\ cost & Lookup & Mem. & URLLC \\
\midrule
\makecell{Chaos\\\footnotesize $\beta$-S-box ($k{=}3$)} &
\makecell{\textbf{67.9}\,$\mu$s\\\footnotesize P95: 97.6} &
\makecell{1 cyc/B\\(table)} &
\makecell{256B\\+ bitmap} &
\makecell{Yes\\\footnotesize (sub-ms)} \\\hline
\makecell{Chaos\\\footnotesize $\beta$-S-box ($k{=}4$)} &
\makecell{127.6\,$\mu$s\\\footnotesize P95: 183.7} &
\makecell{1 cyc/B\\(table)} &
\makecell{256B\\+ bitmap} &
\makecell{Yes\\\footnotesize (sub-ms)} \\\hline
\makecell{GF$(2^8)$\\\footnotesize inv+affine} &
\makecell{1.28\,$\mu$s\\\footnotesize 256 cyc} &
\makecell{1 cyc/B\\(table)} &
\makecell{256B\\or logic} &
\makecell{Yes;\\\footnotesize fastest} \\\hline
\makecell{Fixed\\\footnotesize ROM S-box} &
\makecell{0.32\,$\mu$s\\\footnotesize 64 cyc} &
\makecell{1 cyc/B\\(ROM)} &
\makecell{256B ROM} &
\makecell{Yes;\\\footnotesize reload} \\\hline
\makecell{PRNG S-box\\\footnotesize (LFSR/AES-CTR)} &
\makecell{$\sim$1–5\,$\mu$s\\\footnotesize 256 draws} &
\makecell{1 cyc/B\\(table)} &
256B &
\makecell{Yes;\\\footnotesize PRNG dep.} \\
\bottomrule
\end{tabular}
\vspace{-5mm}
\end{table}

\subsubsection{Closed-Loop Reporting and Acceptance}
Baselines calibrate reconfiguration cost: GF$(2^8)$ inversion+affine table fill is $1.28\,\mu$s at 200\,MHz (1 byte/cycle), and a 256\,B ROM load over a 32-bit bus is $0.32\,\mu$s. These are lower bounds for static/algebraic designs; our generator adds \emph{seedable} diversity without storing multiple tables. The acceptance loop is: (i) start from the sub-ms URLLC setup budget; (ii) select $k\!\in\!\{3,4\}$ and a single-cycle $\beta$-core; (iii) predict $\tau_{\text{gen}}$ via the coupon-collector model; (iv) measure median/P95 $\tau_{\text{gen}}$ and require $<0.2$\,ms; (v) synthesize to obtain area/power and compute $E_{\text{gen}}$; (vi) recheck nonlinearity, DU, LAT, and algebraic degree at the chosen point. At 200\,MHz we obtain: $k{=}3$ $\Rightarrow$ 13{,}586 cycles (P95 19{,}523) $=$ 67.93\,$\mu$s (P95 97.62\,$\mu$s); $k{=}4$ $\Rightarrow$ 25{,}510 cycles (P95 36{,}735) $=$ 127.55\,$\mu$s (P95 183.68\,$\mu$s). Once area/power are reported (LUT/FF/BRAM/DSP or kGE; $P$; $E_{\text{gen}}{=}P\cdot\tau_{\text{gen}}$), a companion table compares against GF-inversion and ROM. All are sub-ms; only the chaos/PRNG fills deliver fresh, seedable S-boxes without large ROM banks—trading tens–hundreds of $\mu$s for on-the-fly diversification desired by URLLC slices.

\subsubsection{Parameter Tuning and Trade-offs}
Dyadic rank $k$ is the main knob: smaller $k$ increases $p{=}2^{-k}$ and lowers $\tau_{\text{gen}}$ (tighter extraction spacing); larger $k$ does the inverse. Measurements show $k\!\in\!\{3,4\}$ at 200\,MHz comfortably meets sub-ms budgets. Fixed-point width $B$ trades area/power against duplicate rate and thus latency; lowering $B$ can be offset by a smaller $k$. Each operating point should report $(k,B)$ and re-verify nonlinearity, DU, and LAT. Micro-architecture offers further tuning: a CSD adder–shift multiplier avoids DSPs; a 1-DSP, shallow pipeline raises $f_{\text{clk}}$; the 256-bit bitmap can be in registers (speed) or BRAM (area); clock-gating the $\beta$-core and writers reduces dynamic power. With $k\!\in\!\{3,4\}$ at 200\,MHz, median $\tau_{\text{gen}}{=}$68–128\,$\mu$s (P95 98–184\,$\mu$s) confirms URLLC compatibility. The remaining knobs, like, $(k,B)$, multiplier style, bitmap placement, and gating, tailor area/energy without degrading the cryptographic profile, closing the loop from requirements to acceptance.

\section{Conclusion}\label{Sec: conclusion}
In this paper, we presented the first-ever chaos-lifted, dyadically sampled S-box tailored to 6G physical-layer security, coupling $\beta$-transformation dynamics with set-conditioned extraction to generate seedable, time-varying $8{\times}8$ permutations on demand. We embed the S-box in complete lightweight block and stream ciphers and in PHY layer authentication, jointly tuning the linear layer and round schedule to suppress the observed differential and linear stress points. In future, we will study finite precision effects by analyzing how the fractional width and dyadic rank shape visit statistics and duplicate rates, derive formal bounds for mixing under quantization, and automate online adaptation of parameters to meet slice specific latency and security budgets. We will explore alternative chaotic sources and hybrid generators that combine $\beta$ dynamics with compact algebraic mixers, and develop formal security arguments against linear, differential, algebraic, and interpolation attacks within standard models. Finally, we will integrate the generator with a slice controller for rapid rekeying, validate performance in over the air experiments, and align testing with emerging standardization and NIST style statistical suites. These steps will turn the presented primitive into a deployable, reconfigurable confusion layer for agile 6G physical layer security.

\end{document}